\documentclass[a4paper,12pt]{article}
\usepackage{amsmath,amsthm}
\usepackage{amsfonts}
\usepackage{graphicx}
\usepackage[utf8]{inputenc}
\usepackage{mathtools}
\usepackage{xcolor}
\usepackage{amssymb}
\usepackage{geometry}
\usepackage{hyperref}
\usepackage{ulem}
\usepackage{comment}
\usepackage{authblk}

\newcommand{\blue}[1]{\textcolor{blue}{#1}}

\newtheorem{theorem}{Theorem}
\newtheorem{remark}{Remark}

\DeclareMathOperator{\sign}{sign}
\title{Purcell swimmer near a wall}
\date{ }
\author{Enrico Micalizio$^1$, Marco Morandotti$^1$, Henry Shum$^2$, Marta Zoppello}
\affil[1]{Department of Mathematical Sciences, Politecnico di Torino, Corso Duca degli Abruzzi, 24, Torino, 10129,Italy}
\affil[2]{Department of Applied Mathematics, University of Waterloo, Waterloo, ON N2L 3G1, Canada}

\begin{document}

\maketitle

\begin{abstract}
We study the effects of hydrodynamic interactions between a wall and the Purcell three-link swimmer in the two-dimensional case. 
After deriving the equations of motion in a low Reynolds number regime using Resistive Force Theory with suitably modified drag coefficients, we show, by means of criteria from Geometric Control Theory, that the system is controllable at configurations that are nearly parallel to the wall. 
Furthermore, we study configurations that are tilted, and we show net displacement with respect to the initial orientation.
Some numerical experiments illustrate the analytical results.
\end{abstract}

\noindent \textbf{Keywords:} Microswimmers, Controllability, Resistive Force Theory, Wall Effects.

\section{Introduction}
Low-Reynolds-number locomotion has been the subject of sustained interest since Purcell's seminal work on life at low Reynolds numbers \cite{purcell}. In this regime, inertia is negligible and the motion of a swimmer results from non-reciprocal shape changes interacting with viscous forces. Among the paradigmatic models introduced by Purcell, the three-link swimmer now commonly referred to as the Purcell swimmer has become a key model for the theoretical study of microswimming, serving as a minimal yet rich system for the analysis of controllability, motion planning, and geometric aspects of locomotion.

In an unbounded planar fluid, the controllability properties of the Purcell swimmer are now well understood \cite{zop13,BECKER_KOEHLER_STONE_2003}. Modeling the swimmer as a kinematic control system driven by internal shape variables, several authors have shown that it is locally controllable around straight or moderately bent configurations \cite{zop15,WiezelOr2016}. These results rely on Lie algebraic techniques and exploit the structure induced by the Stokes equations, which yield a driftless system with control vector fields associated with joint actuation.

By contrast, the effect of boundaries on controllability has received comparatively less attention, despite its clear physical relevance. In realistic environments, microswimmers often operate in confined or partially bounded domains, where nearby walls modify hydrodynamic interactions through no-slip boundary conditions. From a control-theoretic viewpoint, the presence of a wall could alter the resistance matrix and breaks some of the symmetries present in the unbounded case. This naturally raises the question of whether such a boundary hinders controllability by restricting the accessible motions, or whether it may instead enrich the swimmer's dynamics and potentially enhance maneuverability.

The first objective of this paper is to address this question for a Purcell swimmer swimming in the plane near a single planar wall. Focusing on a configuration in which the swimmer is aligned and parallel to the wall, we derive the associated equation of motion using resistive force theory with wall-induced hydrodynamic corrections \cite{Katz_Blake_Paveri-Fontana_1975,LAUGA2006400}, casting it as a control system. Our main theoretical result (Theorem~\ref{main_thm}) shows that the presence of the wall is not an obstruction to controllability: the swimmer remains locally controllable near this aligned configuration. This demonstrates that, at least in this regime, confinement does not destroy the swimmer's ability to generate arbitrary small motions.
Beyond the analysis of controllability via Lie bracket approximations, we investigate the net displacement generated by a classical small-amplitude Purcell stroke when the swimmer starts from a straight configuration tilted with respect to the wall. By explicitly computing the induced motion, we show that the stroke produces a systematic drift in the direction of the initial orientation, without generating any net rotation. We further establish that the presence of the wall affects the norm of the displacement which is no longer constant as in an unbounded fluid domain; this norm is maximized when the swimmer is parallel to the wall.

This behavior differs from that reported in numerical simulations and experiments on swimming spermatozoa and bacteria, where more detailed descriptions of swimming gaits and hydrodynamic interactions typically predict a reorientation and progressive alignment with the wall, often leading to surface accumulation \cite{berke_hydrodynamic_2008, frymier_three-dimensional_1995, rothschild_non-random_1963} or, in some regimes, escape \cite{shum_modelling_2010,smith_human_2009}.
Although the forward-swimming gait for the Purcell swimmer is not hydrodynamically deflected by the wall as it approaches, we show with results from a control-oriented framework based on Lie bracket computations that the swimmer can be steered both toward and away from the wall by adjusting its gait. 
\section{Mathematical model}\label{model2D}
We consider a system composed of a three-link swimmer and a rigid wall, modeled as a straight line in our 2D model, and we adopt a reference frame in which the wall coincides with the $x$-axis. 
Let us denote by $y_t$ the distance, at time $t$, between the wall and the midpoint of the central link, whose position in the plane, at time $t$, is $\mathbf{x}_t=(x_t\,,y_t)$. 
Moreover, the orientation of the central link with respect to the wall is described by the angle $\theta_t \in [-\pi, \pi)$, while the relative angles between the central link and the two outer links are denoted by $\alpha_t^{(\pm1)}\in [-\pi,\pi)$.

    \begin{figure}
        \centering
        \includegraphics{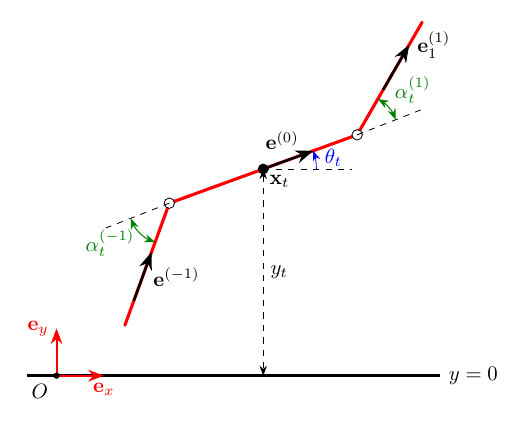} 
        \caption{Representation of the Purcell swimmer near the wall.}
    \end{figure}

Thus, the triple $(x_t\,, y_t\,, \theta_t)$ describes the global position of the swimmer at time $t$, whereas the pair $\boldsymbol{\alpha}_t\coloneqq\big[\alpha_t^{(-1)}, \alpha_t^{(1)}\big]^\top$ represents the shape.

Each link has constant length $L_i$ ($i\in\{\pm1,0\}$) and thickness $r$, and we assume that 
\[
0 < r \ll L_i\quad \text{for every $i\in\{\pm1,0\}$}
\qquad\text{and}\qquad
0 < r \ll y_t,\quad \text{for every $t$.}
\]
We denote by $\mathbf{e}_t^{(i)}$ the unit vector describing the direction of the $i$-th link, 
defined as
\[
\mathbf{e}_t^{(i)} \coloneqq 
\begin{bmatrix}
    \cos\big(\theta_t + i^2\alpha_t^{(i)}\big)\\
    \sin\big(\theta_t + i^2\alpha_t^{(i)}\big)
\end{bmatrix}, \qquad \text{for $i=-1,0,1$} 
\]
(we can fictitiously define $\alpha_t^{(0)}=0$; its contribution is irrelevant).
Finally, we can express the 
position of a point $\mathbf{x}_t^{(i)}(s)$ on the $i$-th link 
\begin{equation}\label{62}
\mathbf{x}_t^{(i)}(s) = \mathbf{x}_t + \sigma_-^{(i)}
\frac{L_0}{2}\mathbf{e}_t^{(0)} + \sigma_+^{(i)} 
s\,\mathbf{e}_t^{(i)},
\end{equation}
where $s \in [0, L_i]$ denotes the distance along the $i$-th link from its hinge, and where we have defined
$\sigma_\pm^{(i)}\coloneqq \sign(i\pm\frac{1}{2})$.\footnote{The explicit values of $\sigma_\pm^{(i)}$ are $\sigma_-^{(-1)}=\sigma_-^{(0)}=\sigma_+^{(-1)}=-1$ and $\sigma_-^{(1)}=\sigma_+^{(0)}=\sigma_+^{(1)}=1$. 
}
Taking the time derivative in \eqref{62}, the velocity of the generic point in the three-link swimmer reads
\begin{equation}\label{83}
\dot{\mathbf{x}}_t^{(i)}(s) =  \dot{\mathbf{x}}_t +\sigma_-^{(i)} 
\frac{L_0}{2}\dot{\theta}_t\,\mathbf{n}_t^{(0)} + \sigma_+^{(i)} 
s\,\big(\dot{\theta}_t + i^2\dot{\alpha}_t^{(i)}\big)\mathbf{n}_t^{(i)},
\end{equation}
where $\mathbf{n}_t^{(i)} \coloneqq R_{\pi/2}\mathbf{e}_t^{(i)}$
is the normal vector obtained 
by rotating the tangent vector~$\mathbf{e}_t^{(i)}$ by $\pi/2$.

Now, we aim to apply Resistive Force Theory \cite{gray_propulsion_1955} to compute the hydrodynamic force and torque densities acting on the swimmer. However, in the presence of a wall, the drag coefficients (usually denoted by $C_{\parallel}$ and $C_{\perp}$ when they are constant) depend explicitly on the distance $d_t^{(i)}(s)=y_t^{(i)}(s)$ of the point $\mathbf{x}_t^{(i)}(s)$ from the wall. 
According to \cite{Katz_Blake_Paveri-Fontana_1975}, the explicit form for these coefficients is the following

\begin{equation} \label{coefficients}
   C_{\perp t}^{(i)}(s) = 4 \pi \mu \log^{-1} \bigg(\frac{2\,d_t^{(i)}(s)}{r}\bigg) =2C_{\parallel t}^{(i)}(s). \end{equation}

In the sequel, we will omit the explicit dependence of the drag coefficients from~$t$.
Let $\mathbf{f}^{(i)}_t(s)$ be the hydrodynamic force density acting on the point $\mathbf{x}_t^{(i)}(s)$ of the $i$-th link. 
Resistive Force Theory yields
\begin{equation} \label{forcedensity}
\begin{split}
\!\! \mathbf{f}_t^{(i)}(s) &= \big[(C_{\parallel}^{(i)}(s) - C_{\perp}^{(i)}(s))\mathbf{e}_t^{(i)} \otimes \mathbf{e}_t^{(i)} + C_{\perp}^{(i)}(s)\mathbf{I}\big]\dot{\mathbf{x}}_t^{(i)} (s) \\
&= C_\perp^{(i)}(s) \bigg[\bigg(\mathbf{I}_2-\frac{1}{2}\mathbf{E}_t^{(i)}\bigg)\dot{\mathbf{x}}_t+\sigma_+^{(i)}\,s\big(\dot\theta_t+i^2\dot\alpha_t^{(i)}\big) \mathbf{n}_t^{(i)} \\
&\phantom{C_\perp^{(i)}(s) \bigg[}\quad +\sigma_-^{(i)}\frac{L_0}{2}\dot\theta_t \bigg(\mathbf{I}_2-\frac{1}{2}\mathbf{E}_t^{(i)}\bigg)\mathbf{n}_t^{(0)}\bigg],
\end{split}
\end{equation}
where we defined $\mathbf{E}_t^{(i)} \coloneqq \mathbf{e}^{(i)}_t \otimes \mathbf{e}^{(i)}_t$ for notational convenience, and $\mathbf{I}_2$ is the $2\times2$ identity matrix.
Using \eqref{forcedensity} and defining $\sigma_{*}^{(i)}\coloneqq\sign(i^2-\frac12)$,\footnote{The explicit values of $\sigma_*^{(i)}$ are $\sigma_*^{(\pm1)}=1$ and $\sigma_*^{(0)}=-1$. Notice that $\sigma_*^{(i)}=\sigma_-^{(i)}\sigma_+^{(i)}$.} the torque density about the reference point $\mathbf{x}_t$ is given by
\begin{equation} \label{torquedensity}
\begin{split}
\tau_t^{(i)}(s) &= C_\perp^{(i)}(s)\bigg[\sigma_-^{(i)}\frac{L_0}{2}\mathbf{n}_t^{(0)}\cdot\bigg(\mathbf{I}_2-\frac{1}{2}\mathbf{E}_t^{(i)}\bigg)\dot{\mathbf{x}}_t 
    + \sigma_+^{(i)}s\, \mathbf{n}_t^{(i)}\cdot\bigg(\mathbf{I}_2-\frac{1}{2}\mathbf{E}_t^{(i)}\bigg)\dot{\mathbf{x}}_t \\
&\phantom{= C_\perp^{(i)}(s)\bigg[}+\frac{L_0^2}{4}\dot{\theta}_t \mathbf{n}_t^{(0)}\cdot\bigg(\mathbf{I}_2-\frac{1}{2}\mathbf{E}_t^{(i)}\bigg)\mathbf{n}_t^{(0)} 
    + \sigma_*^{(i)}\frac{L_0}{2}\dot{\theta}_t s\, \mathbf{n}_t^{(i)}\cdot\bigg(\mathbf{I}_2-\frac{1}{2}\mathbf{E}_t^{(i)}\bigg)\mathbf{n}_t^{(0)}  \\
&\phantom{= C_\perp^{(i)}(s)\bigg[}+\sigma_*^{(i)}\frac{L_0}{2} s\, \big(\dot{\theta}_t+i^2\dot{\alpha}_t^{(i)}\big) \mathbf{n}_t^{(0)}\cdot\mathbf{n}_t^{(i)} 
    +s^2\big(\dot{\theta}_t+i^2\dot{\alpha}_t^{(i)}\big) \bigg].
\end{split}
\end{equation}
Therefore, the total force and torque acting on each link are given by integrating \eqref{forcedensity} and \eqref{torquedensity}, respectively, over $s \in [0, L_i]$, and adding up all the contributions for $i=-1,0,1$.
Yet, since the expressions in \eqref{forcedensity} and \eqref{torquedensity} do not admit closed-form expressions when integrated with respect to~$s$, we proceed by using a  first-order linearization of the coefficients. This is done using the approximation $d_t^{(i)}(s) \simeq y_t$\,, which is valid when the swimmer is nearly parallel to the wall. 
We obtain the following expression
\begin{subequations}\label{136}
\begin{equation}\label{136_C_perp}
 C^{(i)}_{\perp}(s) \simeq 
 G^{(i)}_t + sK^{(i)}_t,
\end{equation}
where we have defined
\begin{eqnarray}
G^{(i)}_t &\coloneqq& 
2 \pi \mu \ell_t^{-1}\big[2 - \sigma_-^{(i)} L_0 \sin(\theta_t)\, y_t^{-1}\ell_t^{-1} \big]\label{G_t} \\ 
K^{(i)}_t &\coloneqq& 
-  4 \pi \mu \sigma_+^{(i)}\,
\sin \big(\theta _t + i^2 \alpha^{(i)}_t\big)\, y_t^{-1}\ell_t^{-2}, \label{K_t}
\end{eqnarray}
\end{subequations}
with $\ell_t\coloneqq \log (2 y_t/r)$.
Thanks to \eqref{136}, we can integrate \eqref{forcedensity} and \eqref{torquedensity} with respect to $s\in [0, L_i]$ (for $i=\pm1,0$) 
and obtain analytic expressions. 
The resulting total force and torque exerted by the fluid on the swimmer are then given by
\begin{subequations}\label{total_force_torque}
\begin{eqnarray}
\mathbf{F}_t & = & \sum _{i = -1}^1 \int_0^{L_i} \mathbf{f}_t^{(i)}(s) \,\mathrm{d} s
= \mathbf{A}_t \dot{\mathbf{x}}_t + \mathbf{B}_t \dot{\boldsymbol{\alpha}}_t + \boldsymbol{\gamma}_t \dot{\theta}_t\,, \label{total_force} \\
T_t & = & \sum _{i = -1}^1 \int_0^{L_i} \tau_t^{(i)}(s) \,\mathrm{d}s 
=  \boldsymbol{\gamma}_t \cdot \dot{\mathbf{x}}_t + \mathbf{b}_t \cdot \dot{\boldsymbol{\alpha}}_t + \delta_t\dot{\theta}_t\,. \label{total_torque}
\end{eqnarray}
\end{subequations}
In \eqref{total_force_torque}, we have defined the $2\times2$ matrices $\mathbf{A}_t$ and $\mathbf{B}_t$ by
\begin{eqnarray*}
\mathbf{A}_t & \coloneqq & 
    \sum_{i=-1}^1 \bigg(G_t^{(i)}L_i+K_t^{(i)}\frac{L_i^2}{2}\bigg)\bigg(\mathbf{I}_2-\frac{1}{2}\mathbf{E}_t^{(i)}\bigg), \\
\mathbf{B}_t & \coloneqq & \bigg[-\bigg(G_t^{(-1)}\frac{L_{-1}^2}{2}+K_t^{(-1)}\frac{L_{-1}^3}{3}\bigg)\mathbf{n}_t^{(-1)} \bigg| \bigg(G_t^{(1)}\frac{L_{1}^2}{2}+K_t^{(1)}\frac{L_{1}^3}{3}\bigg)\mathbf{n}_t^{(1)}\bigg]
\end{eqnarray*}
(notice that $\mathbf{A}_t$ is symmetric), the vectors $\boldsymbol{\gamma}_t$ and $\mathbf{b}_t$ by
\begin{eqnarray*}
\boldsymbol{\gamma}_t  &\coloneqq& \sum_{i=-1}^1 \bigg[\bigg(G_t^{(i)}\frac{L_i^2}{2}+K_t^{(i)}\frac{L_i^3}{3}\bigg)\sigma_+^{(i)}\mathbf{n}_t^{(i)} 
+\frac{L_0}{2}\sigma_-^{(i)}\bigg(G_t^{(i)}L_i+K_t^{(i)}\frac{L_i^2}{2}\bigg) \bigg(\mathbf{I}_2-\frac{1}{2}\mathbf{E}_t^{(i)}\bigg) \mathbf{n}_t^{(0)}\bigg],\quad \\
\mathbf{b}_t &\coloneqq& \Bigg[ \frac{L_0}{2}\bigg(G_t^{(-1)}\frac{L_{-1}^2}{2} + K_t^{(-1)} \frac{L_{-1}^3}{3} \bigg)\mathbf{n}_t^{(0)} \cdot \mathbf{n}_t^{(-1)} + \bigg(G_t^{(-1)}\frac{L_{-1}^3}{3} + K_t^{(-1)} \frac{L_{-1}^4}{4} \bigg)\bigg| \\
    && \phantom{\bigg[} \frac{L_0}{2}\bigg(G_t^{(1)}\frac{L_{1}^2}{2} + K_t^{(1)} \frac{L_{1}^3}{3} \bigg)\mathbf{n}_t^{(0)} \cdot \mathbf{n}_t^{(1)} + \bigg(G_t^{(1)}\frac{L_{1}^3}{3} + K_t^{(1)} \frac{L_{1}^4}{4} \bigg)\bigg] ,
\end{eqnarray*}
and the scalar $\delta_t$ by
\begin{equation*}
\begin{split}
\delta_t \coloneqq & \sum_{i = -1}^1 \Bigg[ \frac{L_0^2}{4}\bigg(G_t^{(i)}L_i + K_t^{(i)} \frac{L_i^2}{2}\bigg) \mathbf{n}_t^{(0)}\cdot \bigg(\mathbf{I}_2-\frac{1}{2}\mathbf{E}_t^{(i)}\bigg)\mathbf{n}_t^{(0)}  \\
&\phantom{\sum_{i = -1}^1 \Bigg[} + \frac{L_0}{2}\sigma_*^{(i)} \bigg(G_t^{(i)}\frac{L_i^2}{2} + K_t^{(i)} \frac{L_i^3}{3} \bigg) \mathbf{n}_t^{(0)}\cdot \bigg(2\mathbf{I}_2-\frac{1}{2}\mathbf{E}_t^{(i)}\bigg)\mathbf{n}_t^{(0)} +  G_t^{(i)}\frac{L_i^3}{3} + K_t^{(i)} \frac{L_i^4}{4} \bigg].
\end{split}
\end{equation*}

Finally, from \eqref{total_force_torque}, the equations of motion can be written compactly as
\begin{equation} \label{motion}
\!\!\!\!\! \mathbf{0}=\begin{bmatrix}
\mathbf{F}_t \\
T_t
\end{bmatrix}
=
\begin{bmatrix}
\mathbf{A}_t & \boldsymbol{\gamma}_t & \mathbf{B}_t \\
\boldsymbol{\gamma}_t^\top & \delta_t & \mathbf{b}_t^\top
\end{bmatrix}
\begin{bmatrix}
\dot{\mathbf{x}}_t \\
\dot{\theta}_t \\
\dot{\boldsymbol{\alpha}}_t
\end{bmatrix}
=
\begin{bmatrix}
    (\mathbf{A}_t)_{11} & (\mathbf{A}_t)_{12} & (\boldsymbol{\gamma}_t)_1 & (\mathbf{B}_t)_{11} & (\mathbf{B}_t)_{12} \\
    (\mathbf{A}_t)_{21} & (\mathbf{A}_t)_{22} & (\boldsymbol{\gamma}_t)_2 & (\mathbf{B}_t)_{21} & (\mathbf{B}_t)_{22} \\
    (\boldsymbol{\gamma}_t)_1 & (\boldsymbol{\gamma}_t)_2 & \delta_t & (\mathbf{b}_t)_1 & (\mathbf{b}_t)_2 
\end{bmatrix}
\begin{bmatrix}
\dot{x}_t \\
\dot{y}_t \\
\dot{\theta}_t \\
\dot{\alpha}_t^{(-1)} \\
\dot{\alpha}_t^{(1)} 
\end{bmatrix}.
\end{equation}

\section{Controllability}
To study the controllability of the system, we employ tools from Geometric Control Theory. 
Starting from equation \eqref{motion}, we introduce two control functions $t\mapsto u_t^{(\pm1)}$ corresponding to the shape changes governed by $\dot{\alpha}_t^{(\pm1)}$, respectively. 
Eventually, we will define the associated control fields $\mathbf{h}_t^{(\pm1)}$ which will be used to compute Lie brackets and analyze the controllability of the system.

By introducing
\[
\mathbf{M}_t \coloneqq 
\begin{bmatrix}
\mathbf{A}_t  & \boldsymbol{\gamma}_t \\
\boldsymbol{\gamma}_t^\top  & \delta_t 
\end{bmatrix},\quad
\mathbf{g}^{(-1)}_t \coloneqq 
\begin{bmatrix}
    (\mathbf{B}_t)_{11} \\
    (\mathbf{B}_t)_{21} \\
    (\mathbf{b}_t)_1
\end{bmatrix},
\quad\text{and}\quad
\mathbf{g}^{(1)}_t \coloneqq 
\begin{bmatrix}
    (\mathbf{B}_t)_{12} \\
    (\mathbf{B}_t)_{22} \\
    (\mathbf{b}_t)_2
\end{bmatrix},
\]
and the control inputs $[u_t^{(-1)}, u_t^{(1)}]^\top \coloneqq [\dot{\alpha}_t^{(-1)}, \dot{\alpha}_t^{(1)}]^\top$,

the equations of motion \eqref{motion} can be rewritten as

\begin{equation}\label{motion_GRM}
\begin{bmatrix}
\mathbf{M}_t& \mathbf{0}_{3 \times 2} \\
\mathbf{0}_{2 \times 3} & \mathbf{I}_{2}
\end{bmatrix}
\begin{bmatrix}
\dot{\mathbf{x}}_t \\
\dot{\theta}_t \\
\dot{\boldsymbol{\alpha}}_t
\end{bmatrix}
= -
\begin{bmatrix}
\mathbf{g}_t^{(-1)} & \mathbf{g}_t^{(1)} \\
1 & 0 \\
0 & 1
\end{bmatrix}
\begin{bmatrix}
u_t^{(-1)} \\
u_t^{(1)}
\end{bmatrix}.
\end{equation}
The matrix $\mathbf{M}_t$ is usually referred to as the \textit{grand resistance matrix}; it is symmetric and invertible around the ``flat'' ($\alpha_t^{(\pm1)}=0$) horizontal ($\theta_t=0$) configuration, at which we have $\det \mathbf{M}_t=8\pi^3\mu^3(L_{-1}+L_0+L_1)^5/3\ell_t^3\neq0$. 
By continuity of the determinant with respect to the matrix entries, $\det\mathbf{M}_t$ remains different from zero also at configurations $\theta_t\,,\alpha_t^{(\pm1)}\approx0$.
This allows us to solve \eqref{motion_GRM} for the position and shape variables and write it as a nonlinear driftless affine control system
\begin{equation}\label{control_sys}
\begin{bmatrix}
\dot{\mathbf{x}}_t \\
\dot{\theta}_t\\
\dot{\boldsymbol{\alpha}}_t
\end{bmatrix}
\;=\;
\mathbf{h}_t^{(-1)}u_t^{(-1)} + \mathbf{h}_t^{(1)}u_t^{(1)}
\end{equation}
where the control vector fields $\mathbf{h}_t^{(\pm1)}$ are defined by
\[
\mathbf{h}_t^{(-1)} \coloneqq
\begin{bmatrix}
-\mathbf{M}_t^{-1}\mathbf{g}^{(-1)}_t \\
1\\
0
\end{bmatrix},
\qquad
\mathbf{h}_t^{(1)} \coloneqq
\begin{bmatrix}
-\mathbf{M}_t^{-1}\mathbf{g}^{(1)}_t \\
0\\
1
\end{bmatrix}.
\]
We are now ready to investigate the local controllability of \eqref{control_sys}. 
To simplify computations, we suppose that the outer links have length $L_{-1}=L_1=L$ and that the central one has length $L_0=\lambda L$, for some $\lambda\in (0,+\infty)$.
\begin{theorem}\label{main_thm}
Let $L>0$ and let $L_{\pm1}=L$, $L_0=\lambda L$, for a certain $\lambda>0$. 
Then control system \eqref{control_sys} is locally controllable around an aligned configuration parallel to the wall, \textit{i.e.}, for $\theta_t\,,\alpha_t^{(\pm1)}\approx0$.
\end{theorem}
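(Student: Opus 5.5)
The plan is to apply the Chow--Rashevskii theorem (Lie algebra rank condition). Control system \eqref{control_sys} is driftless and affine in the controls, and its vector fields $\mathbf{h}^{(\pm1)}$ are real-analytic in $(x,y,\theta,\alpha^{(-1)},\alpha^{(1)})$ in a neighbourhood of the aligned configuration. Indeed, $\mathbf{M}_t$ is invertible there by the determinant computation above, and the coefficients $G_t^{(i)},K_t^{(i)}$ are analytic for $y_t>0$ with $\ell_t=\log(2y_t/r)>0$. Hence it suffices to show that at a point $q_0=(x_0,y_0,0,0,0)$ the five vector fields
\[
\mathbf{h}^{(-1)},\quad \mathbf{h}^{(1)},\quad \mathbf{h}^{(0)}:=[\mathbf{h}^{(-1)},\mathbf{h}^{(1)}],\quad [\mathbf{h}^{(-1)},\mathbf{h}^{(0)}],\quad [\mathbf{h}^{(1)},\mathbf{h}^{(0)}]
\]
span $\mathbb{R}^5$. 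By continuity of the determinant, the rank then stays $5$ for $\theta,\alpha^{(\pm1)}\approx0$, which gives local controllability in the stated regime. The hypotheses $L_{\pm1}=L$, $L_0=\lambda L$ are used only to make the final determinant an explicit function of $\lambda$ and $\ell_0=\log(2y_0/r)$.

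\textbf{Step 1: Taylor expansion of the fields.} Second-order brackets at $q_0$ require derivatives of $\mathbf{h}^{(\pm1)}$ up to second order in $(\theta,\alpha^{(-1)},\alpha^{(1)})$, and up to first/second order in $y$ through $\ell_t$ and the explicit $y_t^{-1}$ factors in \eqref{G_t}--\eqref{K_t}. The dependence on $x$ is trivial by translation invariance along the wall. I would write $\mathbf{M}_t=\mathbf{M}_0+\mathbf{M}_1+\mathbf{M}_2+\dots$ and similarly for $\mathbf{g}_t^{(\pm1)}$, graded by homogeneous degree in $(\theta,\alpha^{\pm1})$. The inverse then follows from the Neumann series $\mathbf{M}^{-1}=\mathbf{M}_0^{-1}-\mathbf{M}_0^{-1}\mathbf{M}_1\mathbf{M}_0^{-1}+\dots$.

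At $q_0$ one has $\sin\theta=0$, so $K^{(i)}=0$ and $G^{(i)}=4\pi\mu\ell_0^{-1}$. The zeroth-order part is then the free-space Purcell swimmer with drag $C_\perp=4\pi\mu/\ell_0$. In that case $\mathbf{h}^{(\pm1)}(q_0)$ has zero $x$-component and nonzero $y,\theta$-components, and $\mathbf{h}^{(0)}(q_0)$ has nonzero $x$-component: the classical Purcell stroke produces net displacement along the body axis. The first-order wall terms enter through $\theta$- and $\alpha$-dependence of $K^{(i)}$ and $G^{(i)}$, proportional to $y_0^{-1}\ell_0^{-2}$. The $y$-dependence enters through $\partial_y\ell=1/y$.

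\textbf{Step 2: the bracket computation and the determinant.} I would compute the brackets in coordinates, $[\mathbf{f},\mathbf{g}]=D\mathbf{g}\,\mathbf{f}-D\mathbf{f}\,\mathbf{g}$, keeping only terms that survive at $q_0$. I would exploit the reflection symmetry $\alpha^{(-1)}\leftrightarrow\alpha^{(1)}$, $s\mapsto-s$, which is available because $L_{-1}=L_1$. This symmetry should make $\mathbf{h}^{(-1)}(q_0)$ and $\mathbf{h}^{(1)}(q_0)$ mirror images, and it pairs the two second-order brackets, roughly halving the work.

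The last step is to assemble the $5\times5$ matrix and compute its determinant, presumably with computer algebra. The result should be $\mu$-independent after normalization and of the form $c(\lambda,\ell_0)L^{p}$. It then remains to show that $c(\lambda,\ell_0)\neq0$ for all $\lambda>0$ and $\ell_0>0$, for instance by factoring it into manifestly signed terms.

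\textbf{Main obstacle.} The difficulty is twofold: the bookkeeping of second-order derivatives of $\mathbf{M}^{-1}$, and verifying that $c$ has no zeros. In free space, the $x$-direction is generated by $\mathbf{h}^{(0)}$, while the $y$- and $\theta$-directions come from second-order brackets combined with rotations. Near the wall, the extra $y$-dependence could in principle cancel a leading term.

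If the determinant turns out to vanish for some $(\lambda,\ell_0)$, I would first include additional brackets such as $[\mathbf{h}^{(0)},[\mathbf{h}^{(-1)},\mathbf{h}^{(0)}]]$. Failing that, I would argue perturbatively in $1/\ell_0$: the leading-order term reproduces the known free-space rank condition \cite{zop13,zop15}, which is nondegenerate for every $\lambda>0$. The wall corrections, of relative size $L/(y_0\ell_0)$, are then small in the regime $r\ll y_0$ assumed in the model, so the rank is preserved.
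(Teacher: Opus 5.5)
Your plan follows the paper's proof: Chow--Rashevskii at $(x_0,y_0,0,0,0)$ with $\mathbf{h}^{(\pm1)}$, $[\mathbf{h}^{(-1)},\mathbf{h}^{(1)}]$ and its two brackets with $\mathbf{h}^{(\pm1)}$, an explicit $5\times5$ determinant, and nonvanishing argued by dominance of the free-space term (the one quartic in $\ell_0$), which is your perturbative fallback. One caveat applies equally to the paper's argument. By the paper's formulas, the determinant equals $\pm2$ times the $x$-component of $\mathbf{h}^3_t$ times the $y$- and $\theta$-components of $\mathbf{h}^4_t$. These depend on $y_0/L$ and not only on $(\lambda,\ell_0)$, and each vanishes at one distance $y_0$ (e.g.\ $y_0\approx0.2L$ for $\lambda=1$, $r=0.01L$). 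So your primary goal of nonvanishing for all parameters cannot succeed. The wall correction has relative size of order $L^2/(\ell_0y_0^2)$ rather than $L/(y_0\ell_0)$, so the dominance step needs $\ell_0y_0^2\gg L^2$, not merely $r\ll y_0$. At the two exceptional distances, your idea of adding higher brackets is what would actually be required.
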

\begin{proof}
In order to prove the local controllability of the system \eqref{control_sys} we resort to the Chow--Rashewskii theorem \cite[Theorem~5.9]{Agrachev}, which states that a sufficient condition for local controllability at a certain configuration is that the Lie algebra generated by the control vector fields at that configuration has dimension equal to the dimension of the tangent space of the overall system. 
In our specific case, we have to show that the control vector fields $\mathbf{h}_t^{(\pm1)}$ and their iterated Lie brackets at $(x_t,y_t,\theta_t , \alpha_t^{(-1)} , \alpha_t^{(1)} )=(x_t,y_t,0,0,0)\eqqcolon \mathbf{s}_t^0$ generate a space of dimension five. 
Through explicit computations, we obtain
\begin{equation*}
\mathbf{h}_t^1(\lambda)\coloneqq \mathbf{h}_t^{(-1)} \big|_{\mathbf{s}_t^0} =
\begin{bmatrix}
0 \\
\displaystyle\frac{L}{ 2 \lambda + 4}\\
\displaystyle\frac{ - 3 \lambda -4}{(\lambda + 2)^3}\\
1 \\
0
\end{bmatrix},\qquad
\mathbf{h}_t^2(\lambda)\coloneqq \mathbf{h}_t^{(1)} \big|_{\mathbf{s}_t^0} =
\begin{bmatrix}
0 \\
\displaystyle-\frac{L}{2 \lambda + 4}\\
\displaystyle\frac{- 3 \lambda -4}{(\lambda + 2)^3}\\
0 \\
1
\end{bmatrix},
\end{equation*}
\begin{equation}\label{net_displ_x}
\mathbf{h}_t^3(\lambda)\coloneqq [\mathbf{h}_t^{(-1)}, \mathbf{h}_t^{(1)}] \big|_{\mathbf{s}_t^0} =
\begin{bmatrix}
\displaystyle\frac{L \lambda (2 \lambda +3)}{(\lambda + 2)^4}\\
0 \\
0 \\
0 \\
0
\end{bmatrix},
\end{equation}
\begin{equation*}
\begin{aligned}
&\mathbf{h}_t^{4}(\lambda)\coloneqq[\mathbf{h}_t^{(1)},[\mathbf{h}_t^{(-1)}, \mathbf{h}_t^{(1)}]] \big|_{\mathbf{s}_t^0} =\\ &
\begin{bmatrix}
\displaystyle-\frac{L^2 \lambda (4 \lambda^4 +7\lambda^3-18\lambda^2-52\lambda-34 )}{6 (\lambda + 2)^7 \ell_t y_t}
  \\[3mm]
\displaystyle\frac{L\lambda^2\big(\ell_t^2 y_t^2(18\lambda^3+120\lambda^2+228\lambda+132)
- L^2 (2\lambda+3) (\lambda+2)^2 (2 + 
      \ell_t)\big)
   }{12 \ell_t^2 y_t^2(\lambda + 2)^7}\\[3mm]
\displaystyle\frac{\lambda \big(3 \ell_t^2 y_t^2(\lambda + 2)^3 - 
   L^2 ( \lambda+1)^3 (2 + 
      \ell_t)\big)}{
   \ell_t^2 y_t^2(\lambda + 2)^{7}}\\
0\\
0
\end{bmatrix},\\& 
\mathbf{h}_t^{5}(\lambda)\coloneqq[\mathbf{h}_t^{(-1)},[\mathbf{h}_t^{(-1)}, \mathbf{h}_t^{(1)}]] \big|_{\mathbf{s}_t^0} =\\ &
\begin{bmatrix}
\displaystyle\frac{L^2 \lambda (4 \lambda^4 +7\lambda^3-18\lambda^2-52\lambda-34 )}{6 (\lambda + 2)^7 \ell_t y_t}\\[3mm]
\displaystyle\frac{L\lambda^2\big(\ell_t^2 y_t^2(18\lambda^3+120\lambda^2+228\lambda+132)
- L^2 (2\lambda+3) (\lambda+2)^2 (2 + 
      \ell_t)\big)
   }{12 \ell_t^2 y_t^2(\lambda + 2)^7}\\[3mm]
\displaystyle - \frac{\lambda \big(3 \ell_t^2 y_t^2(\lambda + 2)^3 - 
   L^2 (\lambda+1)^3 (2 + 
      \ell_t)\big)}{
   \ell_t^2 y_t^2(\lambda + 2)^{7}}\\
0\\
0
\end{bmatrix}.
\end{aligned}
\end{equation*}

Finally, we can compute the determinant of the resulting $5 \times 5$ matrix, which reads

\[
D(\lambda) \coloneqq \mathrm{det}\big(\mathbf{h}_t^{1}(\lambda)\,|\,\mathbf{h}_t^{2}(\lambda) \,|\, \mathbf{h}_t^{3}(\lambda) \,|\, \mathbf{h}_t^{4}(\lambda) \,|\, \mathbf{h}_t^{5}(\lambda)\big)
=\frac{p(\lambda,y_t)}{6 (\lambda + 2)^{10} \ell_t^4 y_t^4}.
\]
where
$$
\begin{aligned}
p(\lambda,y_t)\coloneqq & L^2 \lambda^4(2\lambda+3)\big(L^4(\lambda+1)^3(\lambda+2)^2(2\lambda+3)(2+\ell_t)^2\\
&-3L^2\ell_t^2y_t^2(2+\ell_t)(8\lambda^6+81\lambda^5+324\lambda^4+678\lambda^3+800\lambda^2+512\lambda+140)\\
&+18\ell_t^4y_t^4(\lambda+2)^3(3\lambda^3+20\lambda^2+38\lambda+22)\big).
\end{aligned}$$
Since we are assuming $y_t \gg r$, we have $\ell_t>0$ so that the sign of $p(\lambda,y_t)$ is positive, since it is dominated by the term that is quartic in $\ell_t$\,. 
This implies that the set of vector fields
\[
\big\{\mathbf{h}_t^{1}(\lambda),\mathbf{h}_t^{2}(\lambda) , \mathbf{h}_t^{3}(\lambda) , \mathbf{h}_t^{4}(\lambda) , \mathbf{h}_t^{5}(\lambda)\big\}
\]
spans the 
tangent space at the reference configuration.
This proves the local controllability of the system around the configuration $\mathbf{s}_t^0$ for every $\lambda>0$.
\end{proof}
\begin{remark}
It is well known (see, e.g., \cite{Coron}) that the displacement of the system which starts in the aligned configuration parallel to the wall after a periodic control stroke with piecewise constant controls is given by a multiple of the first order bracket $\mathbf{h}_{t}^3(\lambda)$. 
From \eqref{net_displ_x}, it emerges that this results in a displacement parallel to the wall, independently from the distance of the swimmer from the wall, for all values of $\lambda$. 
Notice that this independence is not linked to the symmetric design of the three-link swimmer: indeed, if the three links all have different lengths, the Lie bracket at the aligned configuration parallel to the wall is given by
$$
\mathbf{h}_t^{3}(L_{-1},L_0,L_1)=
\begin{bmatrix}
\displaystyle\frac{(L_0 L_1 L_{-1} (L_1 (L_0 + L_1) + (L_0 + L_1) L_{-1} + L_{-1}^2))}{(L_0 + L_1 + L_{-1})^4}\\
0 \\
0 \\
0 \\
0
\end{bmatrix},
$$
and notice that $\mathbf{h}_t^3(L,\lambda L,L)=\mathbf{h}_t^3(\lambda)$.
Therefore, we can conclude that if the swimmer is parallel to the wall, after making a small periodic control stroke with piecewise controls, it will move parallel to the wall, regardless of the length of the links.
\end{remark}

\subsection{Swimmer tilted with respect to the wall}
We report here the displacement of the swimmer after a small periodic stroke, when it starts ``flat'', i.e., $\alpha_0^{(\pm1)}=0$, and tilted with respect to the wall at an angle $\theta_0$\,. The resulting displacement can be computed using the first order Lie bracket, evaluated at the configuration $\mathbf{s}^{\theta_0}\coloneqq(x_0\,,y_0\,,\theta_0,0,0)$, 
\begin{equation*}
\overline{\mathbf{h}}\coloneqq [\mathbf{h}_t^{(-1)}, \mathbf{h}_t^{(1)}] \big|_{\mathbf{s}^{\theta_0}}=
  \begin{pmatrix}
      v\cos(\theta_0) \\
     v\sin(\theta_0) \\
0\\
 0\\0
  \end{pmatrix},
\end{equation*}
where the norm of translational displacement is
$$
v=\|\overline{\mathbf{h}}\|=\displaystyle\frac{L \lambda\big(L^2 (
       3 \lambda+4) (2\lambda^2+5\lambda+4) \sin^2(\theta_0) - 
    12 ( 2 \lambda+3) \ell_0^2y_0^2\big)}{
 L^2 (\lambda+2)^4 \sin^2(\theta_0)(\lambda+2)^2 - 
  12 \ell^2_0y_0^2}\,.
$$
From these formulas it is evident that the resulting displacement is a translation in the direction $\mathbf{e}_0^{(0)}$, while the swimmer does not rotate. This means that the presence of the wall does not affect the direction of motion, but only its modulus. This behaviour contrasts with outcomes of numerical simulations of swimming spermatozoa and bacteria that more accurately account for swimming gaits and hydrodynamic interactions with the wall. In particular, such swimmers initially tilted towards a wall typically turn to become more aligned with the wall, ultimately swimming parallel to the wall or, in some cases, escaping from the wall \cite{shum_modelling_2010,smith_human_2009}. In experiments, swimming bacteria and sperm are also commonly observed to accumulate at surfaces \cite{berke_hydrodynamic_2008, frymier_three-dimensional_1995, rothschild_non-random_1963}, a characteristic that can be explained by hydrodynamic interactions \cite{LAUGA2006400}. This result is also in accordance with \cite{OR10} where a comparable behavior is seen, though not within a control setting. The fact that our current analysis does not  predict angular displacements as the swimmer approaches a wall could be due to the assumed form of the wall-dependent drag coefficients \eqref{coefficients}, which is a simplification.

Note that in the regime $y_0\gg r$ the quantity $v$ above is positive. Moreover for $y_0\gg r$ we have that $1/\ell_0^2$ is small, therefore we can expand $v$ in powers of $1/\ell_0^2$ obtaining 
$$
v=\frac{\lambda(3L(2\lambda+3)y_0^2-\frac{1}{\ell_0^2}L^3\sin^2(\theta_0)(\lambda+1)^3)}{3(\lambda+2)^4y_0^2}+o(1/\ell_0^2)\,.
$$

Thus at leading order the displacement is maximum in $\theta_0=0$, namely in the configuration parallel to the wall. This differs from the behavior of the swimmer in an unbounded fluid, for which the displacement is in the direction of $\mathbf{e}_0^{(0)}$ but has a magnitude that is independent of orientation. The presence of the wall thus makes the norm of the displacement vary with the angle of inclination.

\begin{remark}
    Notice that if $|\theta_0|$ is sufficiently small, the system remains controllable also at the tilted configuration. Indeed, since the vector fields and their Lie brackets are real analytic, their determinant is a continuous function. Therefore, the determinant in the configuration $(x_0,y_0,\theta_0,0,0)$ will remain different from zero, 
    proving the local controllability.
\end{remark}

\section{Numerical simulations}
We solve the control system \eqref{control_sys} in MATLAB using the \texttt{ode45} function, which is an implementation of the Runge--Kutta--Fehlberg 4(5) numerical ODE solver. For the numerical solution, we integrate the wall-dependent terms in the resistance matrix directly without the linearizing step \eqref{136_C_perp}. The controls $u_t^{(-1)}$ and  $u_t^{(1)}$ are defined as piecewise constant functions with an amplitude parameter $\xi$ such that the net displacement after one period of the control corresponds to the defined vector fields $\mathbf{h}^1_t, \ldots, \mathbf{h}^5_t$, respectively, in the limit of \(\xi \to 0\). 

The results we present focus on the vector field $\mathbf{h}^3_t=[\mathbf{h}^1_t,\mathbf{h}^2_t]$ since this corresponds to swimming along the direction of the links, starting from the straight configuration $\alpha^{(-1)}_0=\alpha^{(1)}_0=0$. We simulate trajectories using the 4-periodic controls
\begin{equation}
u_t^{(-1)} = \begin{cases} 
\xi, & 0 \leq t \;\mathrm{mod}\; 4 < 1,\\
0, & 1 \leq t \;\mathrm{mod}\; 4 < 2,\\
-\xi, & 2 \leq t \;\mathrm{mod}\; 4 < 3,\\
0, & 3 \leq t \;\mathrm{mod}\; 4 < 4,
\end{cases}\hspace{1cm}
u_t^{(1)} = \begin{cases} 
0, & 0 \leq t \;\mathrm{mod}\; 4 < 1,\\
\xi, & 1 \leq t \;\mathrm{mod}\; 4 < 2,\\
0, & 2 \leq t \;\mathrm{mod}\; 4 < 3,\\
-\xi, & 3 \leq t \;\mathrm{mod}\; 4 < 4.
\end{cases}
\end{equation}
The normalized displacements per cycle are given by
\begin{eqnarray}
    \boldsymbol\Delta = \begin{bmatrix}
        \Delta x \\
        \Delta y \\
        \Delta \theta \\
        \Delta \alpha^{(-1)}\\
        \Delta \alpha^{(1)}
    \end{bmatrix}&:=& \frac{1}{\xi^2}
    \begin{bmatrix}
    x_4 - x_0\\
    y_4 - y_0\\
    \theta_4 - \theta_0\\
    \alpha^{(-1)}_4 - \alpha^{(-1)}_0\\
    \alpha^{(1)}_4 - \alpha^{(1)}_0
    \end{bmatrix} = \mathbf{h}^3_0 + \begin{bmatrix} O(\xi)\\ O(\xi) \\ O(\xi) \\ 0 \\ 0\end{bmatrix} 
\end{eqnarray}
as $\xi \to 0$. In Fig.~\ref{fig:displacements_h3}, the $x$, $y$, and $\theta$ components of \(\boldsymbol\Delta\) are plotted along with their corresponding components of $\mathbf{h}^3_t$ starting from a straight configuration of the links parallel to the wall. The numerical results converge to the asymptotic result as \(\xi\to 0\). We note that for moderate values of $\xi\sim 0.1$, which may be used in practice, there is a drift away from the wall ($\Delta y > 0$) and simultaneous rotation toward the wall ($\Delta\theta < 0)$ when the swimmer is initially parallel to the wall.

\begin{figure}[h]
    \centering
    \includegraphics[width=0.5\linewidth]{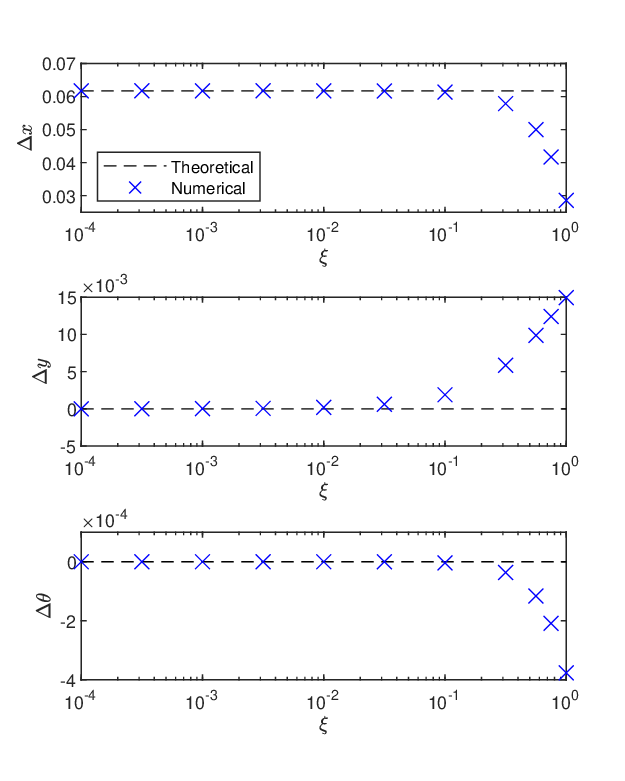}
    \caption{Numerically computed displacements $\Delta x$, $\Delta y$, and $\Delta\theta$ as functions of $\xi$ and corresponding theoretical values from the vector field $\mathbf{h}^3_t$. The initial state is $(x_0,y_0,\theta_0,\alpha^{(-1)}_0,\alpha^{(1)}_0) = (0,2,0,0,0)$. The geometrical parameters used are $L=1$, $\lambda = 1$, $r=0.01$.}
    \label{fig:displacements_h3}
\end{figure}

Additional effects of finite $\xi$ can be noted in Fig. \ref{fig:displacements_h3_tilt}, which shows the displacements as functions of the tilt angle $\theta_0$. The larger value $\xi=0.1$ results in smaller displacements $||\Delta\mathbf{x}||$ and larger deviations in $\Delta\theta$ away from the theoretical value $\Delta\theta = 0$ expected in the limit $\xi\to 0$. When the swimmer is nearly parallel with the wall, $\Delta\theta>0$ for $\theta_0>0$ and $\Delta\theta<0$ for $\theta_0<0$. 

\begin{figure}[h]
    \centering
    \includegraphics[width=0.5\linewidth]{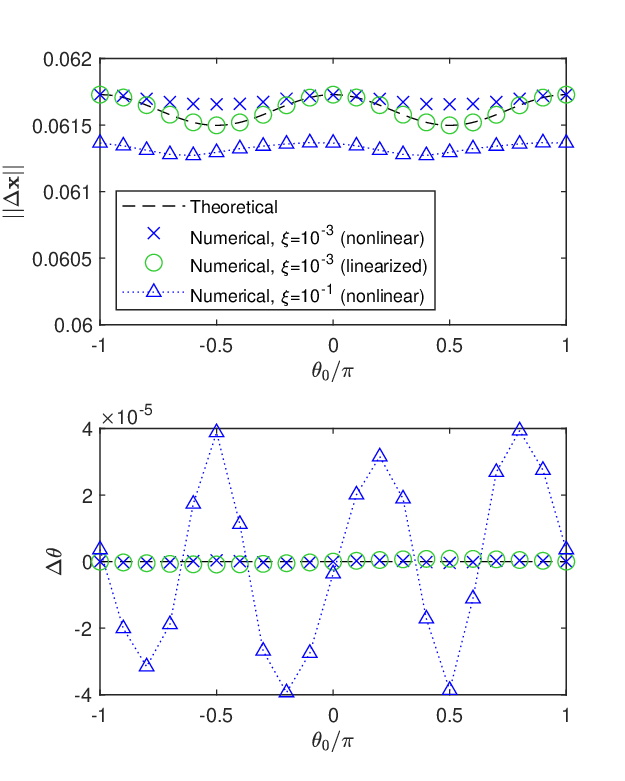}
    \caption{Comparisons between numerically computed and theoretical displacements $||\Delta \mathbf{x}|| = \sqrt{\Delta x^2 + \Delta y^2}$ and $\Delta\theta$ as functions of the tilt angle $\theta_0$ with respect to the wall for controls corresponding to the vector field $\mathbf{h}^3_t$. The initial state is $(x_0,y_0,\theta_0,\alpha^{(-1)}_0,\alpha^{(1)}_0) = (0,2,\theta_0,0,0)$. The geometrical parameters used are $L=1$, $\lambda = 1$, $r=0.01$. The nonlinear numerical solutions use the nonlinear drag coefficients \eqref{coefficients} whereas the linearized solution uses the approximation \eqref{136}.}
    \label{fig:displacements_h3_tilt}
\end{figure}

\begin{figure}[h]  
    \centering
    \includegraphics[width=0.5\linewidth]{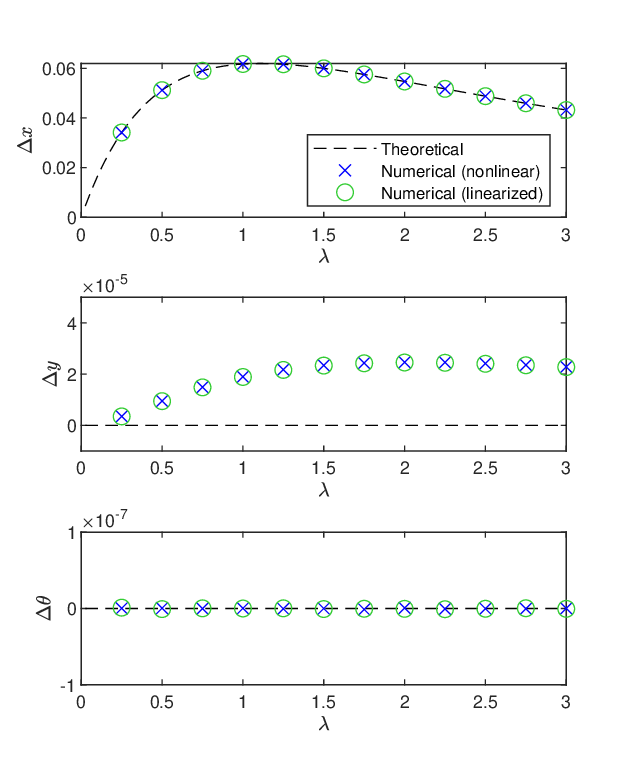}
    \caption{Comparisons between numerically computed and theoretical displacements $\Delta x$, $\Delta y$, and $\Delta\theta$ as functions of the link length ratio $\lambda$ for controls corresponding to the vector field $\mathbf{h}^3_t$ with amplitude $\xi=10^{-3}$. The initial state is $(x_0,y_0,\theta_0,\alpha^{(-1)}_0,\alpha^{(1)}_0) = (0,2,0,0,0)$. The common geometrical parameters used are $L=1$ and $r=0.01$.}
    \label{fig:displacements_h3_lambda}
\end{figure}

To verify the effect of modifying the ratio $\lambda$ between the length of the central link and that of the outer links, we numerically solve the control system with $\xi = 10^{-3}$. As shown in Fig.~\ref{fig:displacements_h3_lambda}, the numerical results agree closely with the theoretical prediction based on \eqref{net_displ_x}.
 Indeed, from the $x$-component of $\mathbf{h}_t^3(\lambda)$, it can be determined that the maximum displacement is obtained at $\lambda=(\sqrt{97}-1)/8\approx1.10611$.
 \clearpage
\section{Conclusions}
In this work we have investigated the dynamics and controllability of a Purcell three-link swimmer moving through a viscous fluid in a plane perpendicular to a rigid wall. Starting from a wall-corrected Resistive Force Theory model, we derived an explicit control system valid for configurations that are nearly parallel to the boundary. Using tools from Geometric Control Theory, we proved that the swimmer remains locally controllable around aligned configurations, showing that the wall does not hinder the ability of the system to generate arbitrary small rigid motions. On the contrary, the boundary breaks some of the symmetries of the unbounded case and enriches the structure of the Lie algebra generated by the control vector fields. The Lie bracket analysis further clarifies the geometric nature of the displacement induced by small periodic strokes even in a tilted configuration. All our theoretical predictions are in good agreement with numerical simulations.

Several directions for future research naturally emerge from our analysis.
In the two-dimensional setting, an improvement would consist in replacing the linearized drag approximation with a more accurate treatment of wall effects, for instance by retaining higher-order terms in expansion \eqref{136_C_perp}. This would allow one to quantify more precisely the range of validity of the present model and to explore regimes in which the swimmer approaches the wall more closely.
Allowing more general three-dimensional motion of the swimmer, several qualitatively new phenomena are expected. First, the configuration space becomes higher-dimensional, with additional orientation degrees of freedom, and the presence of a planar wall breaks rotational symmetry in a non-trivial way. 
A 3D Purcell-type swimmer near a no-slip boundary could exhibit out-of-plane reorientation, circular trajectories, or stable limit cycles analogous to those observed for helical bacteria near surfaces. From a control perspective, it would be natural to investigate whether controllability persists near configurations parallel to the wall and how the Lie algebra structure is modified by the additional rotational degrees of freedom.

\bibliographystyle{unsrt}
\bibliography{bibliografia}

\end{document}